\newcommand{\Hmm}[1]{\leavevmode{\marginpar{\tiny%
$\hbox to 0mm{\hspace*{-0.5mm}$\leftarrow$\hss}%
\vcenter{\vrule depth 0.1mm height 0.1mm width \the\marginparwidth}%
\hbox to 0mm{\hss$\rightarrow$\hspace*{-0.5mm}}$\\\relax\raggedright #1}}}
\newcommand{\nc}{\newcommand}
\nc{\les}{\lesssim}
\nc{\nit}{\noindent}
\nc{\nn}{\nonumber}
\nc{\D}{\partial}
\nc{\diff}[2]{\frac{d #1}{d #2}}
\nc{\diffn}[3]{\frac{d^{#3} #1}{d {#2}^{#3}}}
\nc{\pdiff}[2]{\frac{\partial #1}{\partial #2}}
\nc{\pdiffn}[3]{\frac{\partial^{#3} #1}{\partial{#2}^{#3}}}
\nc{\abs}[1] {\lvert #1 \rvert}
\nc{\cAc}{{\cal A}_c}
\nc{\cE}{{\cal E}}
\nc{\cF}{{\mathcal F}}
\nc{\cP}{{\cal P}}
\nc{\cV}{{\cal V}}
\nc{\cQ}{{\cal Q}}
\nc{\cGin}{{\cal G}_{\rm in}}
\nc{\cGout}{{\cal G}_{\rm out}}
\nc{\cO}{{\cal O}}
\nc{\Lav}{{\cal L}_{\rm av}}
\nc{\cL}{{\cal L}}
\nc{\cB}{{\cal B}}
\nc{\cZ}{{\cal Z}}
\nc{\cR}{{\cal R}}
\nc{\cT}{{\cal T}}
\nc{\cY}{{\cal Y}}
\nc{\cX}{{\cal X}}
\nc{\cXT}{{{\cal X}(T)}}
\nc{\cBT}{{{\cal B}(T)}}
\nc{\vD}{{\vec \mathcal{D}}}
\nc{\efield}{\mathcal{E}}
\nc{\vE}{{\vec \efield}}
\nc{\vB}{{\vec \mathcal{B}}}
\nc{\vH}{{\vec \mathcal{H}}}
\nc{\ty}{{\tilde y}}
\nc{\tu}{{\tilde u}}
\nc{\tV}{{\tilde V}}
\nc{\Pc}{{\bf P_c}}
\nc{\bx}{{\bf x}}
\nc{\bX}{{\bf X}}
\nc{\bXYZ}{{\bf XYZ}}
\nc{\bY}{{\bf Y}}
\nc{\bF}{{\bf F}}
\nc{\bS}{{\bf S}}
\nc{\dV}{{\delta V}}
\nc{\dE}{{\delta E}}
\nc{\TT}{{\Theta}}
\nc{\dPsi}{{\delta\Psi}}
\nc{\order}{{\cal O}}
\nc{\Rout}{R_{\rm out}}
\nc{\eplus}{e_+}
\nc{\eminus}{e_-}
\nc{\epm}{e_\pm}
\nc{\eps}{\varepsilon}
\nc{\vnabla}{{\vec\nabla}}
\nc{\G}{\Gamma}
\nc{\w}{\omega}
\nc{\mh}{h}
\nc{\mg}{g}
\nc{\vphi}{\varphi}
\nc{\tlambda}{\tilde\lambda}
\nc{\be}{\begin{equation}}
\nc{\ee}{\end{equation}}
\nc{\ba}{\begin{eqnarray}}
\nc{\ea}{\end{eqnarray}}
\nc{\g}{\gamma}
\nc{\ol}{\overline}
\newtheorem{theo}{Theorem}
\newtheorem{prop}{Proposition}
\def\R{{\rm \rlap{\rm I}\,\bf R}}
\nc{\pT}{\partial_T}
\nc{\pz}{\partial_z}
\nc{\pt}{\partial_t}
\nc{\la}{\langle}
\nc{\ra}{\rangle}
\nc{\infint}{\int_{-\infty}^{\infty}}
\nc{\halfwidth}{6.5cm}
\nc{\figwidth}{10cm}
\nc{\nlayers}{L} \nc{\nsectors}{M}
\nc{\indicator}{\mathbf{1}}
\nc{\Rhole}{R_{\rm hole}}
\nc{\Rring}{R_{\rm ring}}
\nc{\neff}{n_{\rm eff}}
\nc{\Frem}{F_{\rm rem}}
\nc{\Real}{\mathbb R}
\nc{\Z}{\mathbb Z}
\nc{\DD}{\Delta}
\nc{\cD}{\mathcal D}
\nc{\lnorm}{\left\|}
\nc{\rnorm}{\right\|}
\nc{\rnormp}{\right\|_{\ell^{p,\eps}}}
\nc{\rar}{\rightarrow}
\nc{\sgn}{{\rm sign}}
\date{\today}
\begin{document}

\title{Nearly linear dynamics of nonlinear dispersive waves}

\author{M.~B.~Erdo\smash{\u{g}}an, N.~Tzirakis, and V.~Zharnitsky}
\thanks{The authors were partially supported by NSF grants DMS-0900865 (B.~E.), DMS-0901222 (N.~T.), and DMS-0807897  (V.~Z.)}

\address{Department of Mathematics \\
University of Illinois \\
Urbana, IL 61801, U.S.A.}

\email{berdogan@uiuc.edu \\ tzirakis@math.uiuc.edu\\ vzh@uiuc.edu}

\maketitle

\begin{abstract}
Dispersive averaging effects are used to show that KdV equation with periodic boundary conditions 
possesses 
high frequency solutions which behave nearly linearly. Numerical simulations are presented which 
indicate high 
accuracy of this approximation. Furthermore, this result is applied to shallow 
water wave dynamics in the limit of KdV approximation, which is obtained by asymptotic analysis 
in combination with  
numerical simulations of KdV.
\end{abstract}

\section{Introduction}
The study of the dynamics of high frequency waves, see e.g. \cite{burak-vadim}, has been motivated by 
the so-called quasilinear 
phenomenon in optical communication, where it was observed that spatially localized pulses evolve 
nearly linearly.
The dynamics of high frequency waves can also be motivated  by the well-posedness results in spaces 
of low regularity  for dispersive
PDEs, see {\em e.g.} \cite{bo1, bo2, ckstt}. These papers indicated that a subtle high frequency 
averaging effect took place in the
nonlinear dispersive dynamics making these results possible.

More recently, in \cite{our-arxiv} and \cite{bit}, KdV was studied with regard to this averaging effect. 
In \cite{our-arxiv},
near-linear dynamics was established for high frequency initial data and in \cite{bit} 
a new elegant proof of well-posedness
 in  $H^s, s\geq 0$ was found using explicitly high frequency averaging effects. 

The purpose of this article is twofold. First we establish  
 near-linear dynamics in KdV under weaker and more natural assumptions than \cite{our-arxiv}.
The proof relies 
on the so-called differentiation by parts technique (which is a variant of the normal 
form procedure) from \cite{bit}.
Secondly, we investigate how near-linear dynamics for KdV can be extended to  
the water waves problem. 
We use the standard derivation of KdV in the long wave, shallow water approximation 
to obtain physical parameters for which near-linear behavior might be observed.


We should note that for   KdV on the torus or a circle the linear solution is periodic in space and time and thus 
one does not have dispersive decay. It is also expected that the solutions of KdV on the torus will not be 
approximated by the linear evolution for infinite time. 
Therefore the proof that the nonlinear evolution is almost linear (in the sense of the subsequent Theorem) on 
a finite but large time scale in a way provides evidence that dispersion phenomena are not completely muted on 
the periodic setting. The seminal papers of Bourgain, \cite{bo1, bo2} by establishing Strichartz type estimates 
for periodic dispersive equations was probably the first step towards these new developments.

\begin{theo}\label{theo:main}
Consider the real valued zero mean solution of the KdV equation
$$u_t=u_{xxx}+uu_x$$
on $\mathbf{T}\times \R$ with the initial data $u(x,0)=\phi(x)$ satisfying
$$\|\phi\|_2=1, \,\,\,\,\,\,\,\|\phi\|_{H^{-1/2}}=\eps\ll 1.$$
Then, for each $t>0$ and small $\delta >0$, we have
\[
\|u(\cdot, t)-e^{t\partial_x^3}\phi\|_2 \leq  C_\delta\left ( \eps^2+t \, \eps^{1-\delta} \right ).
\]
\end{theo}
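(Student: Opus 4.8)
The plan is to pass to Fourier coefficients on $\mathbf T$, remove the linear flow, and run the differentiation-by-parts (normal form) procedure to write the nonlinear correction as a controllable quadratic boundary term plus a higher-order cubic remainder. Writing $u(x,t)=\sum_{k\neq0}\widehat u_k(t)e^{ikx}$ and setting $a_k(t)=e^{ik^3t}\widehat u_k(t)$ (the interaction representation, which preserves every $\ell^2$-based norm since $|e^{ik^3t}|=1$), the equation becomes
\[
\frac{d}{dt}a_k=\frac{ik}{2}\sum_{k_1+k_2=k}e^{3ikk_1k_2\,s}\,a_{k_1}a_{k_2},
\]
using $k^3-k_1^3-k_2^3=3kk_1k_2$ on $k_1+k_2=k$. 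The zero-mean hypothesis forces $k,k_1,k_2\neq0$, so the phase $3kk_1k_2$ never vanishes: there are no exact resonances, which is exactly what powers the method. Since $u(\cdot,t)-e^{t\partial_x^3}\phi$ has Fourier coefficients $e^{-ik^3t}\big(a_k(t)-a_k(0)\big)$, Plancherel reduces the theorem to estimating $\big(\sum_k|a_k(t)-a_k(0)|^2\big)^{1/2}$.

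Integrating the ODE from $0$ to $t$ and integrating by parts in $s$ via $e^{3ikk_1k_2 s}=\frac{1}{3ikk_1k_2}\frac{d}{ds}e^{3ikk_1k_2 s}$ produces the smoothing factor $1/(k_1k_2)$ and splits $a_k(t)-a_k(0)$ into a boundary term $B_k$ (quadratic in $a$, evaluated at $s=t$ and $s=0$) and a remainder $R_k$ (cubic in $a$, coming from $\frac{d}{ds}(a_{k_1}a_{k_2})$ through the equation). For the boundary term one finds $B_k=\frac{e^{ik^3t}}{6}\sum_{k_1+k_2=k}\frac{\widehat u_{k_1}\widehat u_{k_2}}{k_1k_2}$ at time $t$, minus the same expression at time $0$, so that $\|B\|_{\ell^2}\lesssim \|w(t)^2\|_{L^2}+\|w(0)^2\|_{L^2}$ with $w=\partial_x^{-1}u$. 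By the Sobolev embedding $H^{1/4}\hookrightarrow L^4$ and $\|w\|_{H^{1/4}}=\|u\|_{H^{-3/4}}\le\|u\|_{H^{-1/2}}$ this gives $\|B\|_{\ell^2}\lesssim \sup_{[0,t]}\|u\|_{H^{-1/2}}^2$, which under a priori control $\|u(s)\|_{H^{-1/2}}\lesssim\eps$ yields the $\eps^2$ term.

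The remainder is the main obstacle. After substituting the equation, $R_k$ is a time integral of $\sum_{p+q+r=k}\frac1r\,e^{i\Omega s}a_pa_qa_r$ with the (symmetric) KdV resonance function $\Omega=3(k-p)(k-q)(k-r)=3(p+q)(q+r)(r+p)$. The exact resonances $\Omega=0$ that survive the zero-mean restriction ($q+r=0$ or $p+r=0$) contribute a multiple of $a_k\sum_r|a_r|^2/r$, which vanishes by the odd symmetry $a_{-r}=\ol{a_r}$; hence only non-resonant configurations remain. The crux is a trilinear estimate giving $\|R(t)\|_{\ell^2}\lesssim t\,\eps^{1-\delta}$. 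A crude H\"older/Sobolev bound is not enough: writing the spatial factor as $\|W V^2\|_{L^2}$ with $V\leftrightarrow a$ and $W=\partial_x^{-1}V$, one only gets $\|W\|_{L^\infty}\|V\|_{L^4}^2\lesssim \eps^{1-\delta}\cdot\eps^{-1}=\eps^{-\delta}$, because $\|V\|_{L^4}^2$ can be as large as $\eps^{-1}$ for frequency-spread data. The missing power $\eps\sim N^{-1/2}$ (with $N\sim\eps^{-2}$ the characteristic frequency) must be recovered by genuinely exploiting the size of $\Omega$ through $\big|\int_0^t e^{i\Omega s}\,ds\big|\le\min(t,2/|\Omega|)$, while the resulting frequency sums are controlled by divisor bounds — the source of the $\eps^{-\delta}$ loss and of the linear-in-$t$ growth from the near-resonant shell. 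I expect this oscillation-driven trilinear estimate to be the hard part.

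Finally, the a priori $H^{-1/2}$ control feeding the boundary estimate is closed by a continuity/bootstrap argument. Since $\|u(t)-e^{t\partial_x^3}\phi\|_{H^{-1/2}}\le\|u(t)-e^{t\partial_x^3}\phi\|_{L^2}$ and the linear flow preserves $\|\cdot\|_{H^{-1/2}}$, the $L^2$ difference estimate yields $\|u(t)\|_{H^{-1/2}}\le\eps+C_\delta(\eps^2+t\,\eps^{1-\delta})$, reproducing $\|u(t)\|_{H^{-1/2}}\lesssim\eps$ on the time range where the right-hand side stays below a fixed constant; for larger $t$ that right-hand side exceeds a constant while the left-hand side is bounded by $\|u\|_2+\|\phi\|_2=2$, so the asserted inequality holds trivially. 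Combining the $\eps^2$ boundary bound with the $t\,\eps^{1-\delta}$ remainder bound then gives the stated estimate.
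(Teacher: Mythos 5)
Your setup (interaction representation, one differentiation by parts, resonance decomposition) matches the paper's, but the proposal has three genuine gaps. First, the resonant cubic terms do \emph{not} all vanish: the sets $\{k_1+k_2=0\}$ and $\{k_1+k_3=0\}$ each sum to zero by the $j\leftrightarrow -j$ symmetry, but their intersection (the diagonal $k_1=-k,\ k_2=k_3=k$) survives and produces the term $-v_k|v_k|^2/k$, which must be kept and estimated (the paper bounds it by $\|v\|_{\dot{H}^{-1/2}}^{2}\|v\|_{L^2}$, contributing $T\eps^2$ after time integration; harmless, but your symmetry argument as stated is incorrect). Second, the step you yourself flag as ``the hard part'' --- the trilinear bound $\|R(t)\|_{\ell^2}\lesssim t\,\eps^{1-\delta}$ --- is not carried out, and the mechanism you propose does not apply as stated: the bound $\bigl|\int_0^t e^{i\Omega s}\,ds\bigr|\le\min(t,2/|\Omega|)$ requires the amplitudes $a_pa_qa_r$ to be constant in $s$; since they are not, extracting the oscillation forces another integration by parts in time, which is exactly what the paper does. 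It differentiates by parts a second time, producing a cubic boundary term $B_3$ with denominator $k_1(k_1+k_2)(k_1+k_3)(k_2+k_3)$ and a quartic time-integrated term $B_4$, and then estimates everything by Cauchy--Schwarz on absolute values --- no divisor counting and no use of the phase beyond the denominators already gained.

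Third, and most seriously, your bootstrap does not close on the relevant time scale. You propose to control $\|u(s)\|_{H^{-1/2}}$ by $\eps+C_\delta(\eps^2+s\,\eps^{1-\delta})$, i.e.\ by the $L^2$ increment bound; this stays $\lesssim\eps$ only for $s\lesssim\eps^{\delta}$, whereas the theorem is nontrivial for all $t\lesssim\eps^{-1+\delta}$, and on $[\eps^{\delta},\eps^{-1+\delta}]$ your a priori input $\|u(s)\|_{H^{-1/2}}\lesssim\eps$ (needed for the $\eps^2$ boundary bound and for the remainder) is no longer available. The paper closes the loop with a \emph{separate, stronger} estimate of the increment in $\dot{H}^{-1/2}$: measured in that norm, every term on the right-hand side (including $B_4$) gains an extra power, giving an integrand of size $\eps^{2-}$ rather than $\eps^{1-}$, so that $\|v(T)-v(0)\|_{\dot{H}^{-1/2}}\lesssim\eps^2+T\eps^{2-}$ and the bound $\|v(T)\|_{\dot{H}^{-1/2}}\lesssim\eps$ persists up to $T\lesssim\eps^{-1+}$. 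Without an analogue of this two-norm scheme your argument proves the theorem only for $t\lesssim\eps^{\delta}$.
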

Note that the difference between the actual solution and the solution of the linear KdV is small in $L^2$
for $t$ up to $\epsilon^{-1+}$. 

We note that near-linear dynamics for high frequency solutions is easier to establish on 
unbounded domains, as the solution disperses to infinity and weakly nonlinear theories apply.
On bounded domains ({\em i.e.} with periodic boundary conditions) the solutions cannot scatter to infinity. 
For the NLS in the 2d torus,  Colliander {\em et al.}, \cite{ckstt1}, gave recently a nice proof. 
\mbox{Theorem \ref{theo:main}} is an intermediate 
result between the iterated linear solutions and the situation at infinity. The nonlinearity averages out 
since dispersion will cause high harmonics to oscillate rapidly. Throughout this paper we assume that we 
have global well-posed solutions which have additional regularity properties. The interested reader should see \cite{bo2} for the details. 
In addition, smooth solutions of KdV satisfy momentum conservation:
\\
$$\int_{-\pi}^{\pi}u(x,t)dx=\int_{-\pi}^{\pi}u(x,0)dx.$$ 
\\
Because of the momentum conservation law we can modify the equation adding a harmless term and thus only consider mean zero solution. This will imply the the Fourier series representation for the solution will have nonzero Fourier modes, an assumption that we consistently make in our paper. In particular notice that all the norms are restricted to this subclass of smooth solutions. In addition we use the conservation of energy,\\
$$\int_{-\pi}^{\pi}u^{2}(x,t)dx=\int_{-\pi}^{\pi}u^{2}(x,0)dx.$$
\\
The KdV equation is locally well-posed (LWP) in $L^{2}(\Bbb T)$, \cite{bo2}. Due to energy conservation KdV is globally well-posed and $u \in C(\Bbb R;L^{2}(\Bbb T)$). Kenig, Ponce, Vega, \cite{kpv}, improved Bourgain's result and showed that the solution of the KdV is LWP in $H^{s}(\Bbb T)$ for any $s>-\frac{1}{2}$. Later, Colliander, Keel, Staffilani, Takaoka, Tao, \cite{ckstt}, showed that the KdV is globally well-posed in
$H^{s}(\Bbb T)$ for any $s \geq -\frac{1}{2}$ thus adding a LWP result for the endpoint $s=-\frac{1}{2}$. Recently T. Kappeler and P. Topalov, \cite{kt} extended the latter result and prove that the KdV is globally well-posed in
$H^{s}(\Bbb T)$ for any $s \geq -1$.
\vskip 0.2in
The main idea of our proof runs as follows. First write KdV
$$u_t=u_{xxx}+uu_x $$
on the Fourier side,
\\
$$\partial_t u_k=\frac{ik}{2}   \sum_{k_1+k_2=k}u_{k_1}u_{k_2}- ik^3u_k,\,\,\,\,\,\,u_k(0)=\widehat\phi(k).$$
\\
Then, using the identity
\\
$$(k_1+k_2)^3-k_1^3-k_2^3=3(k_1+k_2)k_1k_2,$$ 
\\
and the transformation
\\
$$v_k(t)=u_k(t)e^{ik^3t}$$
\\
the equation can be written in the form
\\
$$\partial_t v_k=\frac{ik}{2}   \sum_{k_1+k_2=k}e^{i3kk_1k_2t}v_{k_1}v_{k_2}.$$
\\
The substitution
\\
$$u_{k}(t)=e^{-itk^3}v_k(t)$$
\\
eliminates the linear term $ik^3$ which has the highest growth at infinity and introduces oscillating exponentials into the nonlinear term. We have to show that $ v_k$ stays almost constant for large times under our high frequency assumption. We cannot neglect the averaging effects of the exponent. Without the exponential factor the above system corresponds to Burger's equation which exhibits strongly nonlinear dynamics. Informally speaking, we mainly have three types of terms:
\\
\\
a) Low frequency harmonics, with $n_1, n_2$ small, give negligible contributions because of the high frequency assumption.
\\
\\
b) Intermediate terms are few in number as the Diophantine equation $n_1n_2n=N$ has few solutions.
\\
\\
c) High frequency harmonics, with $n_1, n_2$ large, are well-averaged by the exponent.
\\
\\
We already mentioned that the method we use was inspired by \cite{bit}. Originally it was developed by Babin, Mahalov and Nicolaenko, \cite{bmn, bmn1}, in studying the global regularity of solutions of 3D problems in hydrodynamics (Navier-Stokes or Boussinesq system). In their framework the presence of high-frequency waves lead to destructive interference and weakened the nonlinearity through time averaging allowing one to prove global regularity. For the KdV the high Fourier modes of the linear term generates high-frequency oscillations which make the nonlinearity milder. There is an analogous phenomenon with the propagation of regularity to the Burger's equation with fast rotation
\\
$$u_{t}+uu_{x}=i\Omega u\ \ \ \ u(x,0)=\phi(x).$$
\\
Using Duhamel's formula the solution can be written as
\\
$$u(x,t)=\phi(x)-\int_{0}^{t}e^{i\Omega t}u(x,s)u_{x}(x,s)ds.$$
\\
For large $|\Omega|$ the nonlinearity weakens and the life-span of the solution is prolonged. So large oscillations is what separates the bad behavior of the classical Burger's equation and the good behavior of the Burger's equation with fast rotation. The same method has recently been applied by Kwon and Oh, \cite{ko}, to prove unconditional well-posedness for the modified KdV. The method of differentiation by parts helps to establish a priori estimates only in the $C_{t}^{0}H_{x}^{s}$ norms for any $s \geq \frac{1}{2}$. This is the heart of the matter in proving unconditional uniqueness, that is uniqueness of solutions to the modified KdV equation in the space $C_{t}^{0}H_{x}^{s}$ alone.
\vskip 0.2in
The second motivation for our work comes from the various connections of the high frequency averaging process that
 we describe with certain aspects of the water wave theory. The dynamics of surface water waves has been 
an important object of study in science for over a century. Soliton solutions
and integrability in P.D.E.'s are two examples of remarkable discoveries that were made by
investigating  water wave dynamics in shallow waters. In more recent times,  the so-called rogue waves
have been under an intense investigation, see  for example \cite{KharifPelin2,osborne,zakharov} and the
references therein.
These unusually large waves have been observed in various parts of the ocean in both deep, 
see {\em e.g.} \cite{KharifPelin}, and shallow water, see {\em e.g.} \cite{sand}, motivating scientists 
to suggest various mechanisms for rogue wave formation.

In the case of shallow water, one normally does not work with the full water wave equation but uses
approximate models to study the evolution, in particular the formation of rogue waves.
These models are nonlinear dispersive
equations such as KdV, Boussinesq approximations, {\em etc}. In particular, KdV describes
unidirectional small
amplitude long waves on fluid surface. See, {\em e.g.} \cite{talipova} for applications of KdV to
rouge waves in shallow water. Since rouge waves correspond to concentration of energy on small
domains, one might argue that higher frequencies play important role in rouge waves formation.

Here we provide some evidence, based on asymptotic expansions and numerical simulations that for sufficiently  high frequency initial data, one-dimensional
spatially periodic surface waves in  shallow water exhibit near-linear behavior. Thus, linear theories of rogue wave formations can be extended
to nonlinear high frequency regime.

Clearly, one has to be careful when considering short wave solutions for the equations obtained in
the long wave approximations such as KdV. However, we show that there is a  set of
parameters when our high frequency solutions correspond to a realistic physical scenario
in shallow  water waves, see \mbox{Section \ref{sec:numerics}}.

\section{Normal form reduction using ``differentiation by parts''}
In this section we apply a variant of normal form reduction, called differentiation by parts \cite{bit}, to bring the equation to a more convenient form in which low order resonant terms are separate from the other terms.
 Using the Fourier series representation $$u(x,t)=\sum_{k \in \Bbb Z_{0}}u_{k}(t)e^{ikx}$$ with
\\ $$u_k:=\widehat{u}(k)=\frac{1}{2\pi}\int_0^\pi u(t,x)e^{-ikx} dx$$
\\
we express KdV as an infinite system of ordinary differential equations
\\
$$\partial_t u_k=\frac{ik}{2}   \sum_{k_1+k_2=k}u_{k_1}u_{k_2}- ik^3u_k,\,\,\,\,\,\,u_k(0)=\widehat\phi(k).$$
\\
Notice that since the solution is real valued we have that $\bar{u}_{k}=u_{-k}$. Changing the variable 
\\
$$v_k(t)=u_k(t)e^{ik^3t}$$
\\ (notice again that $\bar{v}_{k}=v_{-k}$), and using the identity
\\
$$(k_1+k_2)^3-k_1^3-k_2^3=3(k_1+k_2)k_1k_2,$$ 
\\we obtain
\begin{align}\label{vkdv}
 \partial_t v_k=\frac{ik}{2}   \sum_{k_1+k_2=k}e^{i3kk_1k_2t}v_{k_1}v_{k_2},\,\,\,\,\,\,v_k(0)=\widehat\phi(k).
\end{align}
\\
Since $e^{i3kk_1k_2t}=\partial_{t}( \frac{1}{3ikk_1k_2}e^{i3kk_1k_2t})$ differentiation by parts and \eqref{vkdv} yields
\\
$$\partial_{t}v_{k}=\partial_{t}\left( \frac12 ik \sum_{k_1+k_2=k}\frac{e^{3ikk_1k_2t}v_{k_1}v_{k_2}}{3ikk_1k_2}\right)-\frac12 ik\sum_{k_1+k_2=k}\frac{e^{3ikk_1k_2t}}{3ikk_1k_2}\partial_{t}(v_{k_1}v_{k_2})=$$
\\
$$\frac16\partial_{t}\left(\sum_{k_1+k_2=k}\frac{e^{3ikk_1k_2t}v_{k_1}v_{k_2}}{k_1k_2}\right)-\frac16 \sum_{k_1+k_2=k}\frac{e^{3ikk_1k_2t}}{k_1k_2}(\partial_{t}v_{k_1}v_{k_2}+\partial_{t}v_{k_2}v_{k_1}).$$
\\
Note that since $v_0=0$, in the sums above $k_1$ and $k_2$ are not zero.
The last two terms are symmetric with respect to $k_1$ and $k_2$ and thus we can consider only one of them. Using \eqref{vkdv} we have
\\
$$\sum_{k_1+k_2=k}\frac{e^{3ikk_1k_2t}}{k_1k_2}v_{k_1}\partial_{t} v_{k_2}=\frac12 i\sum_{k=k_1+k_2}\frac{e^{3ikk_1k_2t}}{k_1}v_{k_1}\left( \sum _{\mu+\lambda=k_2}e^{3itk_2\mu\lambda}v_{\mu}v_{\lambda}\right)=$$
\\
$$\frac12 i\sum_{k=k_1+\mu+\lambda}\frac{v_{k_1}v_{\mu}v_{\lambda}}{k_1}e^{3it[kk_1(\mu+\lambda)+\mu\lambda(\mu+\lambda)]}.$$
\\
We note that $\mu+\lambda$ can not be zero since $\mu+\lambda=k_2$. Using the identity 
\\
$$kk_1+\mu\lambda=(k_1+\mu+\lambda)k_1+\mu\lambda=(k_1+\mu)(k_1+\lambda)$$
\\ and thus by renaming the variables $k_2=\mu, k_3=\lambda$, we have that
\\
$$\sum_{k_1+k_2=k}\frac{e^{3ikk_1k_2t}}{k_1k_2}v_{k_1}\partial_{t} v_{k_2}=\frac12i\sum_{\stackrel{k_1+k_2+k_3=k}{k_2+k_3\neq 0}}\frac{e^{3it(k_1+k_2)(k_2+k_3)(k_3+k_1)}}{k_1}v_{k_1}v_{k_2}v_{k_3}.$$
\\
All in all we have that
\\
$$\partial_t\left(v_k-\frac16B_2(v,v)_k\right)=\frac{i}{6}R_3(v,v,v)_k$$
\\
where
\\
$$B_2(u,v)_k=\sum_{k_1+k_2=k}\frac{e^{3ikk_1k_2t}u_{k_1}v_{k_2}}{k_1k_2}$$
\\
and
\\
$$R_3(u,v,w)_k=\sum_{\stackrel{k_1+k_2+k_3=k}{k_2+k_3\neq 0}}\frac{e^{3it(k_1+k_2)(k_2+k_3)(k_3+k_1)}}{k_1}u_{k_1}v_{k_2}w_{k_3}.$$
\\
Now let's single out the terms (resonant terms) for which
\begin{equation}\label{zeroset}
(k_1+k_2)(k_3+k_1)=0
\end{equation}
 and write
\\
$$R_3(v,v,v)_k=R_{3r}(v,v,v)_k+R_{3nr}(v,v,v)_{k}$$
\\
where the subscript $r$ and $nr$ stands for the resonant and non-resonant terms respectively. Thus
\\
$$R_{3r}(v,v,v)_k=\sum_{\stackrel{k_1+k_2+k_3=k}{k_2+k_3\neq 0}}^{r}\frac{v_{k_1}v_{k_2}v_{k_3}}{k_1}$$
\\
and
\\
$$R_{3nr}(v,v,v)_k=\sum_{k_1+k_2+k_3=k}^{nr}\frac{e^{3it(k_1+k_2)(k_2+k_3)(k_3+k_1)}}{k_1}v_{k_1}v_{k_2}v_{k_3}.$$
\\
The set for which \eqref{zeroset} holds is the disjoint union of the following 3 sets
\\
$$S_{1}=\{k_1+k_2=0\}\cap\{k_3+k_1=0\}\Leftrightarrow \{k_1=-k,\ k_2=k,\ k_3=k\},$$
\\
$$S_{2}=\{k_1+k_2=0\} \cap \{ k_3+k_1\ne 0\} \Leftrightarrow \{k_1=j,\ k_2=-j,\ k_3=k,\ |j| \neq k\},$$
\\
$$S_{3}=\{k_3+k_1=0\}\cap\{k_1+k_2\ne 0\} \} \Leftrightarrow \{k_1=j,\ k_2=k,\ k_3=-j,\ |j| \neq k\}.$$
\\
Thus
\\
$$R_{3r}(v,v,v)_k=\sum_{\lambda=1}^{3}\sum_{S_{\lambda}}\frac{v_{k_1}v_{k_2}v_{k_3}}{k_1}=
\frac{v_{-k}v_{k}v_{k}}{-k} + v_k\sum_{\stackrel{j\in \Bbb Z_{0}}{ |j|\neq k}}\frac{v_{j}v_{-j}}{j} +v_k\sum_{\stackrel{j\in \Bbb Z_{0}}{ |j|\neq k}}\frac{v_{j}v_{-j}}{j}.$$
\\
Note that the second and third terms in the sum above  are identically zero due to the symmetry relation $j \leftrightarrow -j$. Thus
\\
$$R_{3r}(v,v,v)_k=-\frac{v_k}{k}|v_k|^2.$$
\\
We obtain
\\
$$\partial_{t}\left( v_k-\frac16 B_2(v,v)_k\right)=-\frac{i}{6k}v_k|v_k|^2+\frac{i}{6}R_{3nr}(v,v,v)_k.$$
\\
Since the exponent in the last term is not zero we can differentiate by parts one more time and obtain that
\\
$$R_{3nr}(v,v,v)_k=\sum_{k_1+k_2+k_3=k}^{nr}\frac{e^{3it(k_1+k_2)(k_2+k_3)(k_3+k_1)}}{k_1}v_{k_1}v_{k_2}v_{k_3}=$$
\\
$$\frac{1}{3i}\partial_{t}B_{3}(v,v,v)_k-\frac{1}{3i}
\sum_{k_1+k_2+k_3=k}^{nr}\frac{e^{3it(k_1+k_2)(k_2+k_3)(k_3+k_1)}}{k_1(k_1+k_2)(k_2+k_3)(k_3+k_1)}\times$$
\\
$$\left( \partial_{t}v_{k_1}v_{k_2}v_{k_3}+\partial_{t}v_{k_2}v_{k_1}v_{k_3}+\partial_{t}v_{k_3}v_{k_1}v_{k_2}\right)$$
\\
where
\\
$$B_{3}(u,v,w)_k=\sum_{k_1+k_2+k_3=k}^{nr}\frac{e^{3it(k_1+k_2)(k_2+k_3)(k_3+k_1)}}{k_1(k_1+k_2)(k_2+k_3)(k_3+k_1)}u_{k_1}v_{k_2}w_{k_3}.$$
\\
As before we express time derivatives using \eqref{vkdv}. The terms containing $\partial_{t}v_{k_2}$ and $\partial_{t}v_{k_3}$ produce the same expressions and a
calculation reveals that
\\
$$\sum_{k_1+k_2+k_3=k}^{nr}\frac{e^{3it(k_1+k_2)(k_2+k_3)(k_3+k_1)}}{k_1(k_1+k_2)(k_2+k_3)(k_3+k_1)}\times$$
\\
$$\left(\partial_{t}v_{k_1}v_{k_2}v_{k_3}+\partial_{t}v_{k_2}v_{k_1}v_{k_3}+\partial_{t}v_{k_3}v_{k_1}v_{k_2}\right)=iB_{4}(v,v,v,v)_{k}$$
\\
where
\\
$$B_{4}(u,v,w,z)_{k}=\frac12B_4^1(u,v,w,z)_{k}+B_{4}^{2}(u,v,w,z)_{k}.$$
\\
From now on $\sum^*$ means that the sum is over all indices for which the denominator do not vanish.The term corresponding to $\partial_{t}v_{k_1}$ is
\\
$$B_4^1(u,v,w,z)_{k}=\sum_{k_1+k_2+k_3+k_4=k}^{\star}\frac{e^{3it\psi(k_1,k_2,k_3,k_4)}}{(k_1+k_2)(k_1+k_3+k_4)(k_2+k_3+k_4)}u_{k_1}v_{k_2}w_{k_3}z_{k_4},$$
\\
and the sum of the terms corresponding to $\partial_{t}v_{k_2}$ and $\partial_{t}v_{k_3}$ is
\\
$$B_4^2(u,v,w,z)_{k}=\sum_{k_1+k_2+k_3+k_4=k}^{\star}\frac{e^{3it\psi(k_1,k_2,k_3,k_4)}(k_3+k_4)}{k_1(k_1+k_2)(k_1+k_3+k_4)(k_2+k_3+k_4)}u_{k_1}v_{k_2}w_{k_3}z_{k_4}.$$
\\
The phase function $\psi$ will be irrelevant for our calculations since it is going to be estimated out by taking absolute values inside the sums. For completeness we note that it can be expressed as
\\
$$(k_1+k_2+k_3+k_4)^2-k_1^3-k_2^3-k_3^3-k_4^3.$$
\\
Hence for $R_{3nr}(v,v,v)_k$ we have:
\\
$$R_{3nr}(v,v,v)_k=\frac{1}{3i}\partial_{t}B_3(v,v,v)_k-\frac13\left(\frac12B_4^1(v,v,v,v)_{k}+B_{4}^{2}(v,v,v,v)_{k}\right).$$
\\
If we put everything together and combining the two $B_4$ terms in one we obtain
\\
\begin{equation}\label{b4eq}
\partial_t\big(v_k-\frac16 B_2(v,v)_k - \frac{1}{18}B_3(v,v,v)_k\big)
= -\frac{iv_k|v_k|^2}{6k}+\frac{i}{18}B_4(v,v,v,v)_k,
\end{equation}
where
\\
$$
B_2(v)_k =\sum_{k_1+k_2=k}\frac{e^{i3kk_1k_2t}v_{k_1}v_{k_2}}{k_1k_2}
$$
\\
$$
B_3(v)_k =\sum^*_{k_1+k_2+k_3=k}\frac{e^{i3(k_1+k_2)(k_1+k_3)(k_2+k_3)t}v_{k_1}v_{k_2}v_{k_3}}
{k_1(k_1+k_2)(k_1+k_3)(k_2+k_3)}
$$
\\
$$
B_4(v)_k =\frac12\sum^*_{k_1+k_2+k_3+k_4=k}\frac{e^{i\psi(k_1,k_2,k_3,k_4)t}(2k_3+2k_4+k_1)v_{k_1}v_{k_2}v_{k_3}v_{k_4}}
{k_1(k_1+k_2)(k_1+k_3+k_4)(k_2+k_3+k_4)},
$$
\\
where $\psi(k_1,k_2,k_3,k_4)=(k_1+k_2+k_3+k_4)^3-k_1^3-k_2^3-k_4^3$.

\section{Proofs}
\noindent
{\em Notation:} 
To avoid the use of multiple constants, we  write $A \lesssim B$ to denote that there is an absolute  constant 
$C$ such that $A\leq CB$.  We will also use frequently the notation 
$A \lesssim B(\eta-)$ if for any $\gamma>0$, $A \leq C_{\gamma}B(\eta-\gamma)$. Similar notation 
will be used  for $A \lesssim B(\eta+)$.
Finally, for $s\in\R$, we define the homogeneous Sobolev norm
\[
\|u\|_{\dot{H}^s}=\Big(\sum_{k\neq 0} |k|^{2s}|u_k|^2\Big)^{1/2}.
\]
We have
\begin{prop}\label{prop:apriori}
The following a-priori estimates hold
\be\label{p1}
\|B_2(v)\|_{\dot{H}^{-1/2}}\leq \|B_2(v)\|_{L^2} \lesssim \|v\|_{\dot{H}^{-1/2}}^2,
\ee
\be\label{p2}
\|B_3(v)\|_{\dot{H}^{-1/2}}\leq \|B_3(v)\|_{L^2} \lesssim \|v\|_{\dot{H}^{-1/2}}^2 \|v\|_{L^2},
\ee
\be\label{p3}
\|B_4(v)\|_{L^2} \lesssim \|v\|_{\dot{H}^{-1/2}}^{1-} \|v\|_{L^2}^{3+},
\ee
\be\label{p4}
\|B_4(v)\|_{\dot{H}^{-1/2}}\lesssim \|v\|_{\dot{H}^{-1/2}}^{2-} \|v\|_{L^2}^{2+},
\ee
\be \label{p5}
\|v_k^3/k\|_{\ell^2}\lesssim \|v\|_{\dot{H}^{-1/2}}^{2} \|v\|_{L^2}.
\ee
\end{prop}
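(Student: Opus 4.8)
The plan is to prove all five bounds by a single elementary scheme: discard the unit-modulus oscillatory factors $e^{i(\cdots)t}$ by passing to absolute values inside every sum, and then estimate the resulting multilinear sums of the $|v_k|$ against the denominators using Cauchy--Schwarz together with a few convergent weight sums. Throughout I would set $w_k=|v_k|/|k|^{1/2}$, so that $\|w\|_{\ell^2}=\|v\|_{\dot H^{-1/2}}$, and $a_k=|v_k|$, so that $\|a\|_{\ell^2}=\|v\|_{L^2}$; the whole point of the denominators is that they supply exactly the decay needed to convert products of $a$'s into these two norms. The first inequalities in \eqref{p1} and \eqref{p2} are immediate, since $|k|\ge 1$ forces $|k|^{-1/2}\le 1$ and hence $\|\cdot\|_{\dot H^{-1/2}}\le\|\cdot\|_{L^2}$.

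For \eqref{p1} I would bound $|B_2(v)_k|\le\sum_{k_1+k_2=k}\frac{w_{k_1}w_{k_2}}{|k_1|^{1/2}|k_2|^{1/2}}$ and then apply Cauchy--Schwarz after splitting the half-powers crosswise, $\frac{w_{k_1}w_{k_2}}{|k_1|^{1/2}|k_2|^{1/2}}=\frac{w_{k_1}}{|k_2|^{1/2}}\cdot\frac{w_{k_2}}{|k_1|^{1/2}}$. This yields the pointwise bound $|B_2(v)_k|\le\sum_{j}\frac{w_j^2}{|k-j|}=(h*m)_k$ with $h_j=w_j^2$ and $m_j=|j|^{-1}$, and since $m\in\ell^2$, Young's inequality $\|h*m\|_{\ell^2}\le\|h\|_{\ell^1}\|m\|_{\ell^2}$ gives $\|B_2(v)\|_{L^2}\lesssim\|h\|_{\ell^1}=\|v\|_{\dot H^{-1/2}}^2$. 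Estimate \eqref{p5} is even simpler and purely pointwise: $|v_k|^6/k^2=(|v_k|^2/|k|)^2|v_k|^2\le\|v\|_{\dot H^{-1/2}}^4\,|v_k|^2$, and summing in $k$ gives the claim.

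For \eqref{p2} I would estimate $\|B_3(v)\|_{\ell^2}$ directly. After absolute values the denominator is $|k_1|\,|k-k_1|\,|k-k_2|\,|k_1+k_2|$ (using $k_1+k_2+k_3=k$), and I would apply Cauchy--Schwarz in the two free indices $k_1,k_2$ with the weight split $W=|k_1||k_2|$, assigning the $\dot H^{-1/2}$ weight to $v_{k_1},v_{k_2}$ and the $L^2$ weight to $v_{k_3}$. Summing the coefficient factor $\sum\frac{a_{k_1}^2a_{k_2}^2a_{k_3}^2}{|k_1||k_2|}$ over $k$ frees the constraint and factors as $\|v\|_{\dot H^{-1/2}}^4\|v\|_{L^2}^2$, while the leftover weight sum $\sum_{k_1,k_2}\frac{|k_1||k_2|}{|k_1|^2|k-k_1|^2|k-k_2|^2|k_1+k_2|^2}$ is absolutely convergent uniformly in $k$, because every denominator factor is squared and each variable therefore sees a summable $|m|^{-2}$ tail. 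No $\eps$-room is needed, consistent with the clean statement of \eqref{p2}.

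The bulk of the work, and the main obstacle, is \eqref{p3}--\eqref{p4}. The same Cauchy--Schwarz scheme applies, but the $B_4$ symbol now carries the degree-one numerator $|2k_3+2k_4+k_1|\lesssim|k|+|k_1|+|k_2|$ against the fourfold denominator $|k_1|\,|k_1+k_2|\,|k-k_1|\,|k-k_2|$, so once the four factors of $v$ are split off the remaining weight sums are only logarithmically convergent and, in the intermediate-frequency regime, require counting lattice points on the level sets of the products in the denominator. I would treat these with the divisor bound $d(n)\lesssim_\gamma|n|^{\gamma}$ (this is case (b) of the introduction: $n_1n_2n=N$ has few solutions), splitting into the low-, intermediate-, and high-frequency regimes (a)--(c) and using $\sum|m|^{-2}<\infty$ in the high regime. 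The resulting $|n|^{0+}$ loss is absorbed by upgrading one $\dot H^{-1/2}$ factor to $\dot H^{-1/2+\gamma}$ and interpolating $\|v\|_{\dot H^{-1/2+\gamma}}\lesssim\|v\|_{\dot H^{-1/2}}^{1-2\gamma}\|v\|_{L^2}^{2\gamma}$, which is precisely what produces the $1-/3+$ exponents of \eqref{p3} and the $2-/2+$ exponents of \eqref{p4}. The delicate bookkeeping of these near-resonant weight sums --- keeping the degree-one numerator from destroying convergence while extracting only an arbitrarily small power loss --- is where essentially all of the difficulty lies.
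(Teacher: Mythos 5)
Your treatment of \eqref{p1} and \eqref{p5} is fine and essentially matches the paper (which uses Young's inequality in the form $\ell^{4/3}*\ell^{4/3}\to\ell^2$ plus H\"older rather than your $\ell^1*\ell^2\to\ell^2$, but both work). For \eqref{p2} your Cauchy--Schwarz scheme and weight split are exactly the paper's, but your justification of the key convergence does not hold as stated: after cancellation the weight sum is $\sum_{k_1,k_2}|k_2|\,|k_1|^{-1}|k_1+k_2|^{-2}|k-k_1|^{-2}|k-k_2|^{-2}$, which carries a degree-one numerator and only a first power of $|k_1|$; the slogan ``every denominator factor is squared, so each variable sees a summable $|m|^{-2}$ tail'' ignores the regions where individual factors are $O(1)$ while $|k_2|$ is comparable to $|k|$ (e.g.\ $k_2=k\pm 1$). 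The sum is indeed uniformly bounded, but establishing this requires something like the paper's reduction $|k-k_1||k-k_2|\gtrsim|k_1-k_2|$ followed by the case analysis $|k_1|>2|k_2|$, $|k_1|<|k_2|/2$, $|k_1|\approx|k_2|$.

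The genuine gap is \eqref{p3}--\eqref{p4}. You correctly identify these as carrying essentially all of the difficulty, and then you do not prove them: the divisor-bound/lattice-point-counting plan is asserted rather than executed, and you explicitly defer ``the delicate bookkeeping of these near-resonant weight sums.'' Moreover the proposed tool is not obviously the right one here: the denominator of $B_4$ is already factored into the linear forms $|k_1|$, $|k_1+k_2|$, $|k-k_2|$, $|k-k_1|$, each raised to a definite power, so there are no level sets of products to count. The paper's proof is entirely elementary. It splits $|B_4|\lesssim B_4^1+B_4^2$ according to the two terms of the numerator, applies Cauchy--Schwarz with $\eps$-shifted weights such as $|k_1|^{1/2-}$ and $|k-k_2|^{1/2+}$, eliminates one summation variable using the convolution constraint, and reduces everything to one-dimensional sums controlled by $\sum_n|n+m|^{-a}|n|^{-b}\lesssim|m|^{-b}$ (for $a>b\geq 1$) together with $|k+k_2||k-k_2|\gtrsim|k_2|$. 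The $0+$ loss arises from needing exponents strictly above $1$ in these sums, and is converted into the $1-/3+$ and $2-/2+$ exponents by the same interpolation you describe. Until you actually carry out and verify the convergence of your weight sums for both the $L^2$ and the $\dot H^{-1/2}$ norms of both pieces of $B_4$, the two hardest estimates of the proposition remain unproved.
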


Now we will prove Theorem~\ref{theo:main} using Proposition~\ref{prop:apriori}.
\begin{proof}[Proof of Theorem~\ref{theo:main}]
First note that
\\
$$
\|u(\cdot, t)-e^{t\partial_x^3}\phi\|_2=\|u_k(t)-e^{ik^3t}u_k(0)\|_{\ell^2(k)}=\|v_k(t)-v_k(0)\|_{\ell^2(k)}.
$$
\\
We will estimate the R.H.S. using \eqref{b4eq}.
Integrating \eqref{b4eq} from $0$ to $T$ we have
\begin{multline}\nonumber
v_k(T)-v_k(0)= \frac16 B_2(v(T))-\frac16 B_2(v(0))+\frac{1}{18} B_3(v(T)) - \frac{1}{18} B_3(v(0))   \\
 -\frac{i}{6} \int_0^T \frac{v_k|v_k|^2}{k}dt+\frac{i}{18}\int_0^T B_4(v)_k dt.
\end{multline}
The estimates in Proposition~\ref{prop:apriori}, and the fact that for each $t$, $\|v(t)\|_{L^2}\lesssim 1$, imply that
\begin{align}\label{L2est}
\|v(T)-v(0) \|_{L^2}&\lesssim   \|v(T)\|_{\dot{H}^{-1/2}}^2+ \|v(0)\|_{\dot{H}^{-1/2}}^2 \\ \\&
+\int_0^T \big( \|v(t)\|_{\dot{H}^{-1/2}}^2+ \|v(t)\|_{\dot{H}^{-1/2}}^{1-}\big) dt,\nonumber
\end{align}
\vskip 0.1in
\begin{align}\label{Hsest}
\|v(T)-v(0) \|_{\dot{H}^{-1/2}}&\lesssim   \|v(T)\|_{\dot{H}^{-1/2}}^2+ \|v(0)\|_{\dot{H}^{-1/2}}^2  \\ \\&+\int_0^T \big( \|v(t)\|_{\dot{H}^{-1/2}}^2+ \|v(t)\|_{\dot{H}^{-1/2}}^{2-}\big) dt.\nonumber
\end{align}
Since $\|v(0)\|_{\dot{H}^{-1/2}}=\|\phi\|_{\dot{H}^{-1/2}}<\eps$, the inequality \eqref{Hsest} and the continuity of the solution in $L^2$ (and hence in $\dot{H}^{-1/2}$) imply that, for all $T\lesssim \eps^{-1+}$, $\|v(T)\|_{\dot{H}^{-1/2}} \lesssim \eps.$
Using this in \eqref{L2est} implies that for $T\lesssim \eps^{-1+}$
\\
$$
\|v(T)-v(0) \|_{L^2}\lesssim \eps^2+T\eps^{1-}.
$$
\end{proof}

\begin{proof}[Proof of Proposition~\ref{prop:apriori}]
We start with \eqref{p5}. Using $v_0=0$
\\
$$
\|v_k^3/k\|_{\ell^2}\leq \|v_k/\sqrt{k}\|_{\ell^\infty}^2 \|v_k\|_{\ell^2}\leq \|v_k/\sqrt{k}\|_{\ell^2}^2 \|v_k\|_{\ell^2}= \|v\|_{\dot{H}^{-1/2}}^{2} \|v\|_{L^2}.
$$
\\
We continue with \eqref{p1}. It suffices to estimate $B_2$ in $L^2$:
\begin{align*}
\|B_2(v)\|_{L^2} &=\Big\|\sum_{k_1+k_2=k}\frac{e^{i3kk_1k_2t}v_{k_1}v_{k_2}}{k_1k_2}\Big\|_{\ell^2}
\leq \big\|\frac{|v_k|}{k}*\frac{|v_k|}{k}\big\|_{\ell^2}\\
\\
&\lesssim \|v_k/k\|_{\ell^{4/3}}^2 \lesssim \|v\|_{\dot{H}^{-1/2}}^2,
\end{align*}
where we used the Young and H\"older inequalities in the second and third inequalities respectively. Now consider \eqref{p3}:\\
$$
\|B_3(v)\|_{L^2}^2 =\Big\|\sum^*_{k_1+k_2+k_3=k}\frac{e^{i3(k_1+k_2)(k_1+k_3)(k_2+k_3)t}v_{k_1}v_{k_2}v_{k_3}}
{k_1(k_1+k_2)(k_1+k_3)(k_2+k_3)}\Big\|_{\ell^2}^2
$$
\\
$$\leq \Big\|\sum^*_{k_1+k_2+k_3=k}\frac{\sqrt{|k_2|}}
{\sqrt{|k_1|}|k_1+k_2||k_1+k_3||k_2+k_3|}\frac{|v_{k_1}|}{\sqrt{|k_1|}}\frac{|v_{k_2}|}{\sqrt{|k_2|}}|v_{k_3}|\Big\|_{\ell^2}^2
$$
\\
By Cauchy Schwarz we estimate this by
\\
$$\sum_k \Big(\sum^*_{k_1+k_2+k_3=k}\frac{|k_2|}
{|k_1||k_1+k_2|^2|k_1+k_3|^2|k_2+k_3|^2}\Big)\Big(\sum^*_{n_1+n_2+n_3=k}\frac{|v_{n_1}|^2}{|n_1|}\frac{|v_{n_2}|^2}{|n_2|}|v_{n_3}|^2\Big)
$$
\\
$$\leq \sup_k \Big(\sum^*_{k_1+k_2+k_3=k}\frac{|k_2|}
{|k_1||k_1+k_2|^2|k_1+k_3|^2|k_2+k_3|^2}\Big)\Big(\sum^*_{n_1,n_2,n_3}\frac{|v_{n_1}|^2}{|n_1|}\frac{|v_{n_2}|^2}{|n_2|}|v_{n_3}|^2\Big)
$$
\\
$$= \|v\|_{\dot{H}^{-1/2}}^4 \|v\|_{L^2}^2 \,\,\, \sup_k  \sum^*_{k_1+k_2+k_3=k}\frac{|k_2|}
{|k_1||k_1+k_2|^2|k_1+k_3|^2|k_2+k_3|^2}.
$$
\\
It remains to show that the supremum above is finite. Note that the supremum is $\lesssim$
\begin{align*}
\sup_k \sum^*_{k_1,k_2}\frac{|k_2|}
{|k_1||k_1+k_2|^2|k-k_1|^2|k-k_2|^2}&\lesssim \sum^*_{k_1,k_2}\frac{|k_2|}
{|k_1||k_1+k_2|^2 |k_1-k_2|^2}
\end{align*}
where we used the fact that, for $k_1,k_2\neq k$, $|k-k_1||k-k_2|\gtrsim |k_1-k_2|$.
Now to estimate this sum, consider the cases $|k_1|>2|k_2|$, $|k_1|<|k_2|/2$, and $|k_1|\approx|k_2|$ separately. In the first case, the sum is $\lesssim$
\\
$$\sum_{|k_1|>2|k_2| }^* \frac{1}{|k_2|^2 |k_1|^2 }<\infty.
$$
\\
In the second case, we have
\\
$$\sum_{|k_1|<|k_2|/2 }^* \frac{1}{|k_2|^3 |k_1| }\lesssim \sum_{ k_2  }^* \frac{\log(|k_2|)}{|k_2|^3 }<\infty.
$$
\\
In the third case we have
\\
$$\sum_{|k_1|\approx|k_2| }^* \frac{1}{|k_1+k_2|^2 |k_1-k_2|^2 }\lesssim \sum_{n_1,n_2  }^* \frac{1}{n_1^2 n_2^2 }<\infty.
$$
\\
Finally, we consider $B_4$. First note that
\begin{align*}
|B_4(v)_k|&\lesssim \sum^*_{k_1+k_2+k_3+k_4=k}\frac{|v_{k_1}v_{k_2}v_{k_3}v_{k_4}|}
{|k_1+k_2||k_1+k_3+k_4||k_2+k_3+k_4|}\\&+ \sum^*_{k_1+k_2+k_3+k_4=k}\frac{|v_{k_1}v_{k_2}v_{k_3}v_{k_4}|}
{|k_1||k_1+k_2| |k_2+k_3+k_4|}=:B_4^1(v)_k+B_4^2(v)_k.
\end{align*}
First we consider the $L^2$ norm of $B_4^1$. Applying Cauchy Schwarz as in the case of $B_3$, we have
\begin{align*}
&\|B_4^1(v)\|_{L^2}^2 \lesssim \\&\Big\|\sum^*_{k_1+k_2+k_3+k_4=k}\frac{|k_1|^{\frac12-}|v_{k_4}|}
{|k_1+k_2||k_1+k_3+k_4|^{\frac12-}|k_2+k_3+k_4|}\frac{|v_{k_1}v_{k_2}v_{k_3}| }
{|k_1|^{\frac12-}|k_1+k_3+k_4|^{\frac12+}}\Big\|_{\ell^2}^2\\
&
\leq\sup_k\sum^*_{k_1+k_2+k_3+k_4=k}\frac{|k_1|^{1-}|v_{k_4}|^2}
{|k_1+k_2|^2|k_1+k_3+k_4|^{1-}|k_2+k_3+k_4|^2}\\
&\,\,\,\,\,\,\,\,\,\,\,\Big(\sum^*_{n_1,n_2,n_3,n_4}\frac{|v_{n_1}v_{n_2}v_{n_3}|^2}
{|n_1|^{1-}|n_1+n_3+n_4|^{1+}}\Big).
\end{align*}
Note that the sum in the parenthesis is $\lesssim \|v\|^2_{\dot{H}^{-1/2+}}\|v\|_{L^2}^4$ (by summing in $n_4$ first). We estimate the supremum by eliminating $k_3$ in the sum as follows
\begin{multline}\label{123}
\sup_k\sum^*_{k_1,k_2,k_4}\frac{|k_1|^{1-}|v_{k_4}|^2}
{|k_1+k_2|^2|k-k_2|^{1-}|k-k_1|^2}\\=\|v\|_{L^2}^2\sup_k\sum^*_{k_1,k_2 }\frac{|k_1|^{1-} }
{|k_1+k_2|^2|k-k_2|^{1-}|k-k_1|^2}
\end{multline}
Using $|k_1+k_2|^2|k-k_2|^{1-}|k-k_1|^2\gtrsim |k_1|^{1-}|k_1+k_2|^{1+}|k-k_1|^{1+}$ we have
\\
$$\eqref{123}\lesssim \|v\|_{L^2}^2\sup_k\sum^*_{k_1,k_2  }\frac{1}
{|k_1+k_2|^{1+}|k-k_1|^{1+}}\lesssim \|v\|_{L^2}^2.
$$
\\
The last inequality follows by summing first in $k_2$ then in $k_1$. Now consider the $L^2$ norm of $B_4^2$. Similarly, we obtain
\begin{align*}
\|B_4^2(v)\|_{L^2}^2 &\lesssim  \sup_k\sum^*_{k_1+k_2+k_3+k_4=k}\frac{|v_{k_1}|^2|v_{k_2}|^2|v_{k_3}|^2 }
{|k_1|^{1-} }\\&\,\,\,\,\,\,\,\, \Big(\sum^*_{n_1,n_2,n_3,n_4}\frac{|v_{n_4}|^2}
{|n_1|^{1+}|n_1+n_2|^{2}|n_2+n_3+n_4|^2}\Big).
\end{align*}
The sum in parenthesis is $\lesssim \|v\|_{L^2}^2$ by summing first in $n_3$, then $n_4$, then $n_2$, and then in $n_1$. Finally
\\
$$
\sup_k\sum^*_{k_1+k_2+k_3+k_4=k}\frac{|v_{k_1}|^2|v_{k_2}|^2|v_{k_3}|^2 }
{|k_1|^{1-} }\leq \sum^*_{k_1,k_2,k_3}\frac{|v_{k_1}|^2|v_{k_2}|^2|v_{k_3}|^2 }
{|k_1|^{1-}}=\|v\|_{L^2}^4\|v\|^2_{\dot{H}^{-1/2+}}.
$$
\\
Combining the estimates for $B_4^1$ and $B_4^2$, we obtain \eqref{p3}:
\\
$$
\|B_4(v)\|_{L^2}\lesssim \|v\|_{L^2}^3\|v\|_{\dot{H}^{-1/2+}}\lesssim \|v\|_{L^2}^{3+}\|v\|^{1-}_{\dot{H}^{-1/2}}.
$$
\\
It remains to prove \eqref{p4}. We start by estimating $B_4^1$. Applying Cauchy Schwarz as above we have
\begin{align*}
\|B_4^1(v)\|_{\dot{H}^{-1/2}}^2& \leq \sup_k\sum^*_{k_1+k_2+k_3+k_4=k}\frac{|k_1|^{1-}|k_2|^{1-}|v_{k_4}|^2}
{|k||k_1+k_2|^2|k_1+k_3+k_4|^{1-}|k_2+k_3+k_4|^2}\\
& \Big(\sum^*_{n_1,n_2,n_3,n_4}\frac{|v_{n_1}v_{n_2}v_{n_3}|^2}
{|n_1|^{1-}|n_2|^{1-}|n_1+n_3+n_4|^{1+}}\Big).
\end{align*}
Note that the sum in parenthesis $\lesssim  \|v\|_{L^2}^2\|v\|_{\dot{H}^{-1/2+}}^4$. Eliminating $k_3$ in the first sum we have
\begin{align*}
\|B_4^1(v)\|_{\dot{H}^{-1/2}}^2& \lesssim  \|v\|_{L^2}^2\|v\|_{\dot{H}^{-1/2+}}^4\sup_k\sum^*_{k_1,k_2,k_4}\frac{|k_1|^{1-}|k_2|^{1-}|v_{k_4}|^2}
{|k||k_1+k_2|^2|k-k_2 |^{1-}|k-k_1|^2}\\
&\lesssim \|v\|_{L^2}^4\|v\|_{\dot{H}^{-1/2+}}^4\sup_k\sum^*_{k_1,k_2 }\frac{|k_1|^{1-}|k_2|^{1-} }
{|k||k_1+k_2|^2|k-k_2 |^{1-}|k-k_1|^2}
\end{align*}
Note that the sum above is bounded by
\\
$$
\sum^*_{k_1,k_2 }\frac{ |k_2|^{1-} }
{ |k_1+k_2|^2|k-k_2 |^{1-}|k-k_1|^2}+\sum^*_{k_1,k_2 }\frac{ |k_2|^{1-} }
{|k||k_1+k_2|^2|k-k_2 |^{1-}|k-k_1|^{1+}}
$$
\\
$$
\lesssim \sum^*_{ k_2 }\frac{ |k_2|^{1-} }
{ |k+k_2|^2|k-k_2 |^{1-} }+\sum^*_{k_2 }\frac{ |k_2|^{1-} }
{|k||k +k_2|^{1+}|k-k_2 |^{1-}}\lesssim \sum^*_{k_2 }\frac{ 1}
{ |k +k_2|^{1+} }<\infty
$$
\\
In the first inequality we used  (for $a>b\geq 1$)
\be\label{abineq}
\sum_n^*\frac{1}{|n+m|^a|n|^b}\lesssim \frac{1}{|m|^{b}},
\ee
which follows by considering the cases $|n|<|m|/2$ and $|n|\geq |m|/2$ separately. In the second inequality
we used $|k+k_2||k-k_2|\gtrsim |k_2|$ and $|k||k-k_2|\gtrsim |k_2|$. Similarly,
\begin{align*}
\|B_4^2(v)\|_{\dot{H}^{-1/2}}^2& \leq \sup_k\sum^*_{k_1+k_2+k_3+k_4=k}\frac{|v_{k_1}|^2|v_{k_2}|^2 |v_{k_3}|^2}
{|k_1|^{1-}|k_2|^{1-}}\\
& \Big(\sum_k^*\sum^*_{n_1+n_2+n_3+n_4=k}\frac{|n_2|^{1-}|v_{n_4}|^2}
{|k||n_1|^{1+}|n_1+n_2|^2 |n_2+n_3+n_4|^2}\Big).
\end{align*}
Note that the supremum is $\lesssim  \|v\|_{L^2}^2\|v\|_{\dot{H}^{-1/2+}}^4$. Eliminating $n_3$ in the parenthesis we obtain
\begin{align*}
\|B_4^2(v)\|_{\dot{H}^{-1/2}}^2& \lesssim  \|v\|_{L^2}^2\|v\|_{\dot{H}^{-1/2+}}^4 \sum^*_{n_1,n_2,n_4,k}\frac{|n_2|^{1-}|v_{n_4}|^2}
{|k||n_1|^{1+}|n_1+n_2|^2 |k-n_1|^2}\\
&\lesssim \|v\|_{L^2}^4\|v\|_{\dot{H}^{-1/2+}}^4 \sum^*_{n_1,n_2,k }\frac{|n_2|^{1-} }
{|k||n_1|^{1+}|n_1+n_2|^2 |k-n_1|^2}
\end{align*}
Applying \eqref{abineq} to the sum in $k$, we have
\begin{align*}
\|B_4^2(v)\|_{\dot{H}^{-1/2}}^2& \lesssim \|v\|_{L^2}^4\|v\|_{\dot{H}^{-1/2+}}^4 \sum^*_{n_1,n_2  }\frac{|n_2|^{1-} }
{ |n_1|^{2+}|n_1+n_2|^2  }\lesssim \|v\|_{L^2}^4\|v\|_{\dot{H}^{-1/2+}}^4.
\end{align*}
The last inequality follows by applying \eqref{abineq} to the sum in $n_1$ and then summing in $n_2$.
This yields \eqref{p4}.
\end{proof}

\section{Numerical simulations demonstrating high accuracy of approximation}
In this section, we present numerical evidence that for the initial data with sufficiently high 
frequency, linear KdV approximates very well nonlinear KdV. As the initial condition we use
 the first Hermit function, appropriately scaled
\[
u(x) = \frac{1}{\sqrt{\epsilon}}\left ( \frac{x}{\epsilon} \right ) e^{-\frac{x^2}{2\epsilon^2}}.
\]

The Figures \ref{fig:spec2pi} and \ref{fig:phys2pi} show the initial and evolved waves in KdV with 
periodic boundary conditions $u(x+2\pi) = u(x)$ for the time interval $T = 2\pi$. Here $\epsilon =0.1$ and 
the time step is $\Delta t = 10^{-7}$.
Note that the linear KdV evolution is $2\pi$-periodic in time.  Therefore, if 
the near-linear dynamics takes place, we should see nearly perfect return of the evolved data to 
the original profile. Both figures confirm such behavior.

\begin{figure}[htd]
\includegraphics[width=80mm]{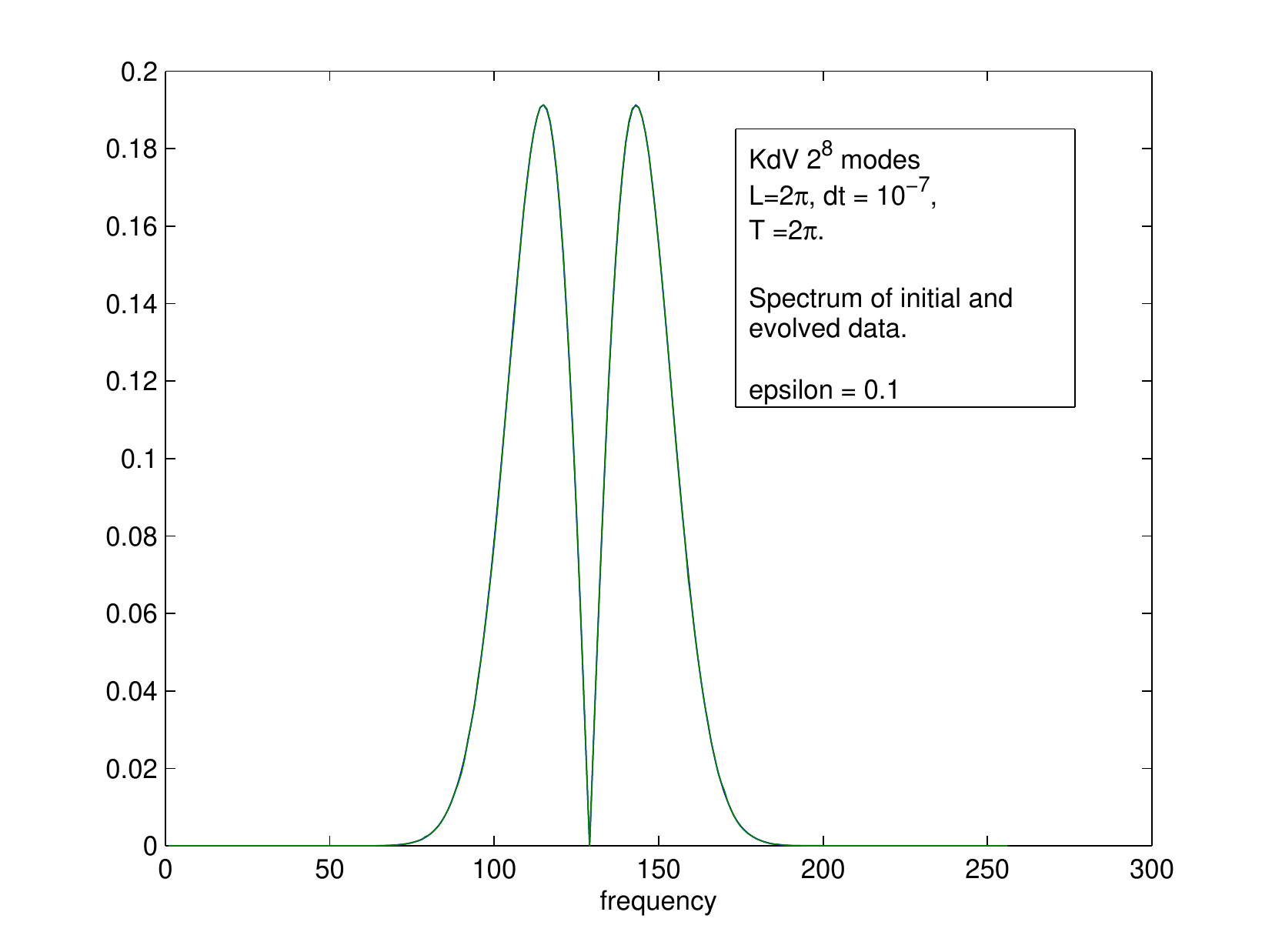}
\caption{Spectral data for initial and evolved waves.}
\label{fig:spec2pi}
\end{figure}

\begin{figure}[htd]
\includegraphics[width=100mm]{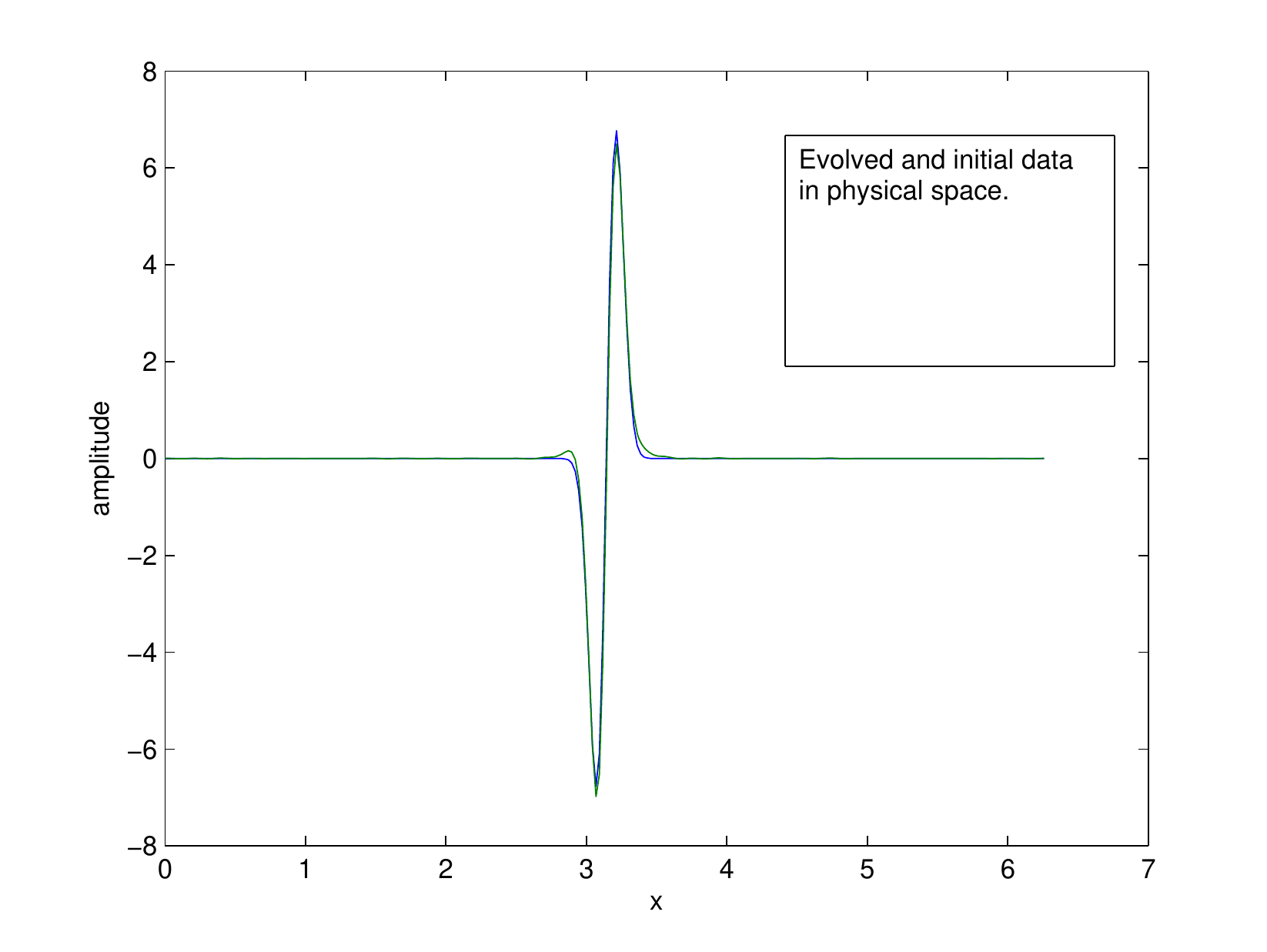}
\caption{The initial and evolved data. The time interval is $2\pi$, which is a period of linear KdV, therefore the initial and the evolved data are very close.}
\label{fig:phys2pi}
\end{figure}

The Figure \ref{fig:shorttime} demonstrates an obvious but important property that away from $t=2\pi N$, the evolved data is very far from the initial data.

\begin{figure}[htd]
\includegraphics[width=80mm]{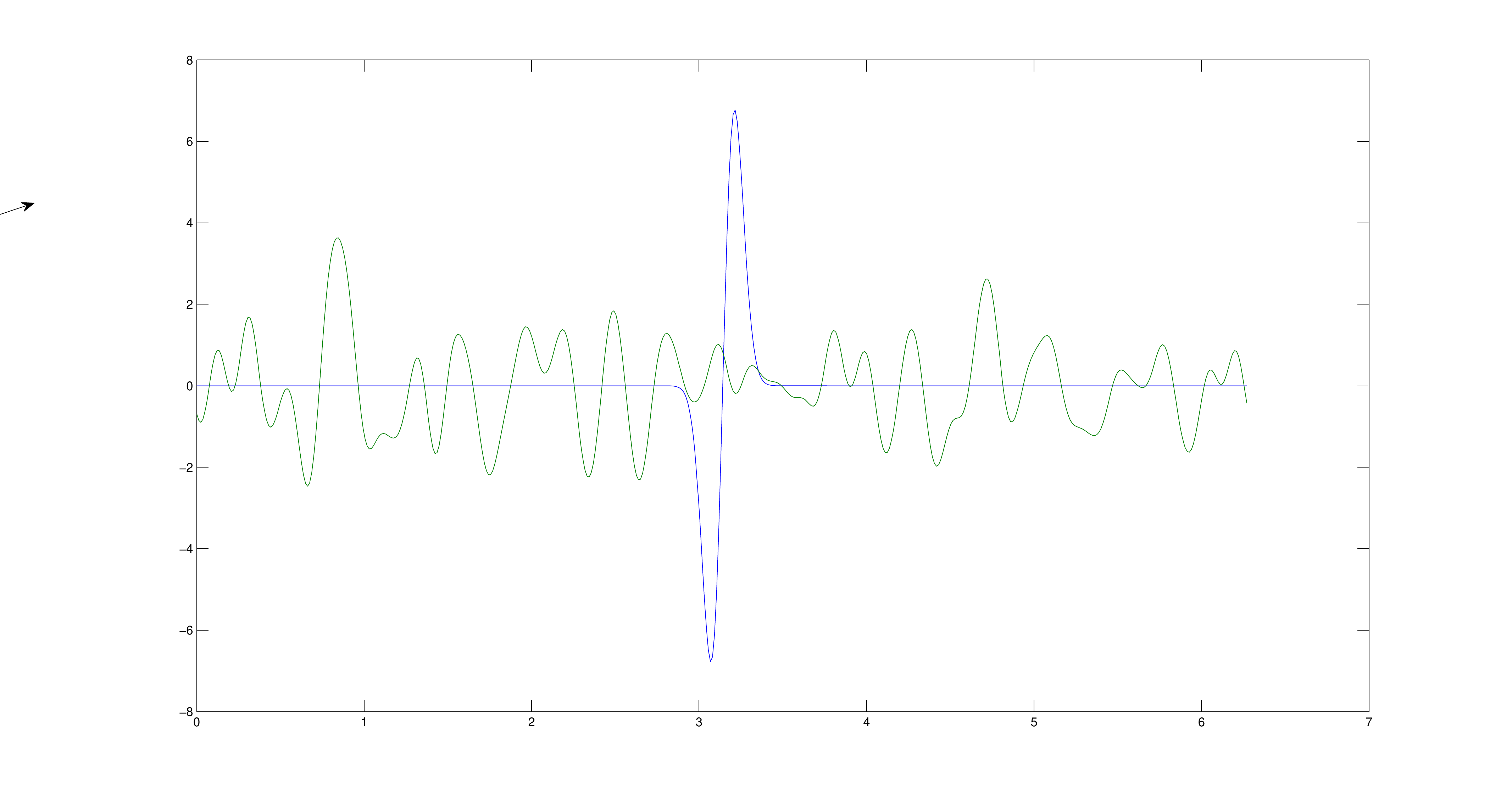}
\caption{Evolved data after short time. There is large distortion in physical space because of the strong dispersion. }
\label{fig:shorttime}
\end{figure}

\pagebreak

\section{Near-linear dynamics in water waves: asymptotics, 
numerical simulations and physical interpretation}
\label{sec:numerics}

Rogue waves are large-amplitude waves appearing on the sea surface seemingly ``from nowhere''.
Such abnormal waves have been also observed in shallow water and KdV has been used to explain this
phenomenon \cite{talipova}.
In the oceanographic literature, the following amplitude criterion for the rogue
 wave formation is generally used: the height of the rogue wave should
exceed the significant wave height by a factor of 2-2.2 \cite{KharifPelin}. (Significant wave height
 is the average wave height of the one-third largest waves.)

Major scenarios and explanations of rogue waves involve
\begin{itemize}
\item probabilistic approach: rogue waves are considered as rare events
in the framework of Rayleigh statistics
\item linear mechanism: dispersion enhancement (spatio-temporal focusing)
\item nonlinear mechanisms: in approximate models ({\em e.g.} KdV), for some special
initial data, large amplitude waves can be created.
\end{itemize}

Linear mechanism is very attractive as there are simple solutions leading to large amplitudes,
while nonlinear mechanism requires rather special initial data,
{\em e.g.} leading to the soliton formation. On the other hand, linear equations arise in
the small amplitude limit which is too restrictive.

Using near-linear dynamics in KdV one can experiment with another mechanism of large wave formation
that combines linear and nonlinear
deterministic mechanisms.  Our results indicate that for a special but relatively large set
of initial data  (characterized by  the energy contained mostly in high frequency Fourier modes),
the solutions of KdV equation behave near-linearly. It is then possible {\em to construct large
amplitude solutions using linear mechanisms of large wave formation}.

Here, we illustrate our approach with periodic boundary conditions. This model is not the most
realistic one but appropriate to illustrate the concept.

The KdV equation has been used to describe surface water waves in the small
amplitude limit of long waves in shallow water. More precisely, two  parameters
are assumed to be small and equal
\[
\frac{\rm amplitude}{\rm depth} \sim
\left ( \frac{\rm depth}{\rm wave length} \right )^2 \ll 1.
\]
Our numerical simulations of KdV show that the near-linear dynamics phenomenon occurs when small
parameter $\epsilon$  characterizing high frequency limit (see the formula below),
is only moderately small $\epsilon = 0.4$. The following three figures show that with $\epsilon = 0.4$,
 there is still a clear presence of near-linear evolution of KdV for some reasonable time interval $T=1$.
 
On the other hand, we will show that this value of $\epsilon=0.4$ is sufficiently large so that KdV still 
approximates shallow water waves dynamics.

\begin{figure}[b]
\begin{tabular}{cc}
\hspace{-15mm}\includegraphics[width=75mm]{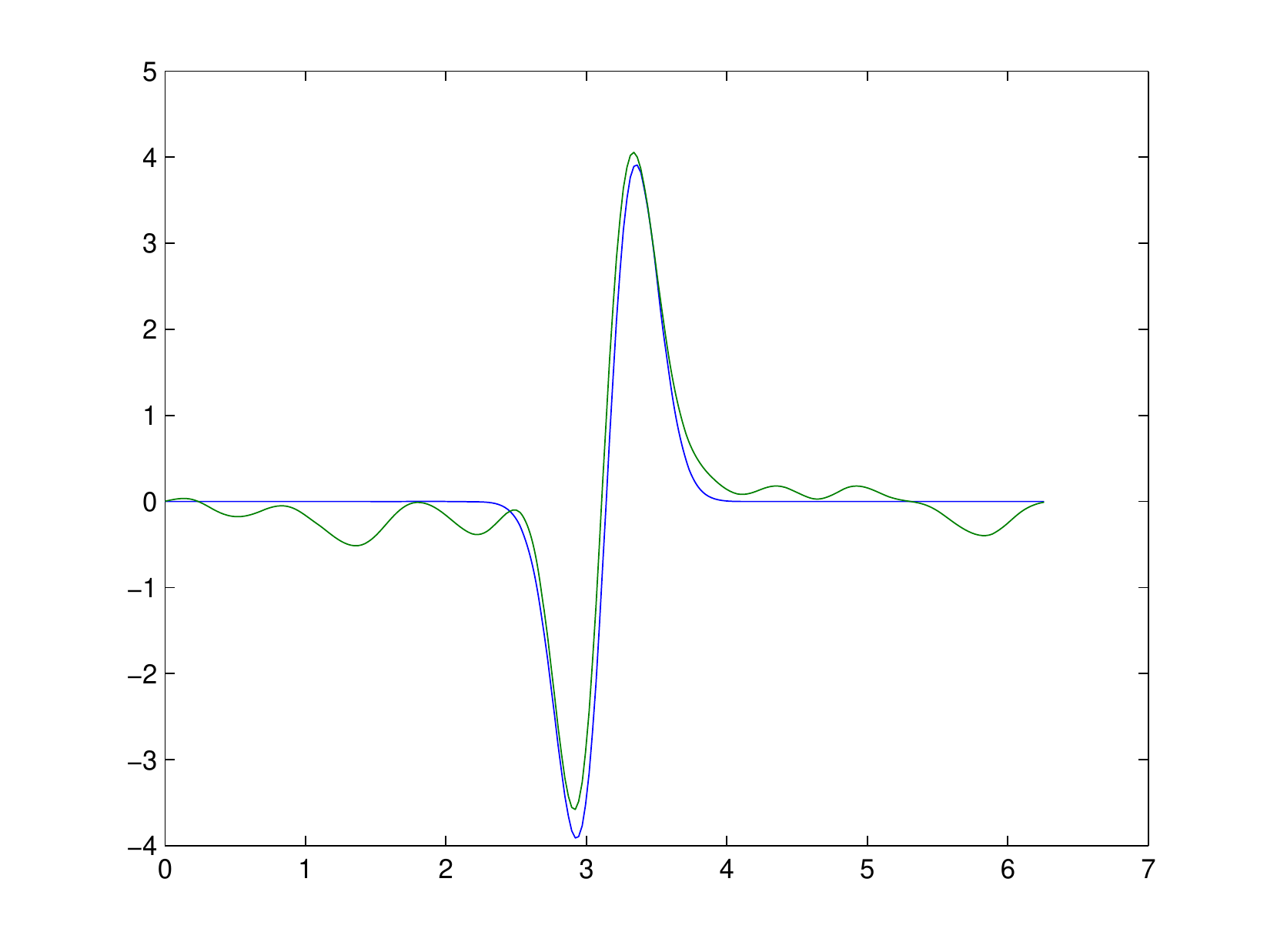} &
\includegraphics[width=75mm]{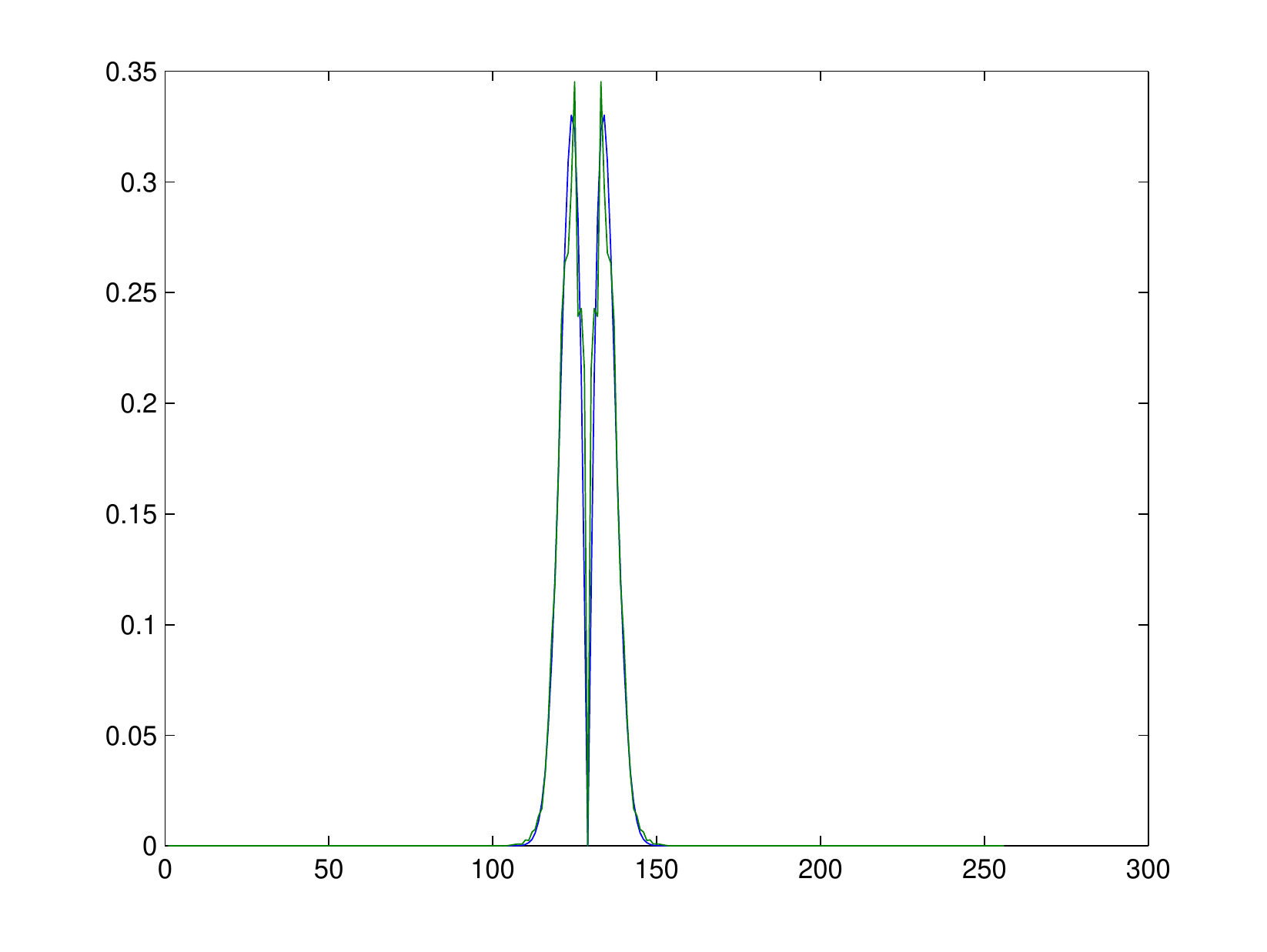}
\end{tabular}

\caption{Initial and evolved waves in KdV in Fourier space (right) and
 physical space (left).
The nonlinearly evolved data in physical space is pulled back
with reverse linear evolution, $e^{-Lt}u(x,t)$, for proper comparison.
 Abscissa shows the number of
Fourier harmonic (right) and spatial coordinate (left),
while ordinate is the amplitude. The time of evolution is $T =1$, {\bf $\epsilon=0.4$}.}
\label{fig1}
\end{figure}

\begin{figure}[htd]
\includegraphics[width =60mm]{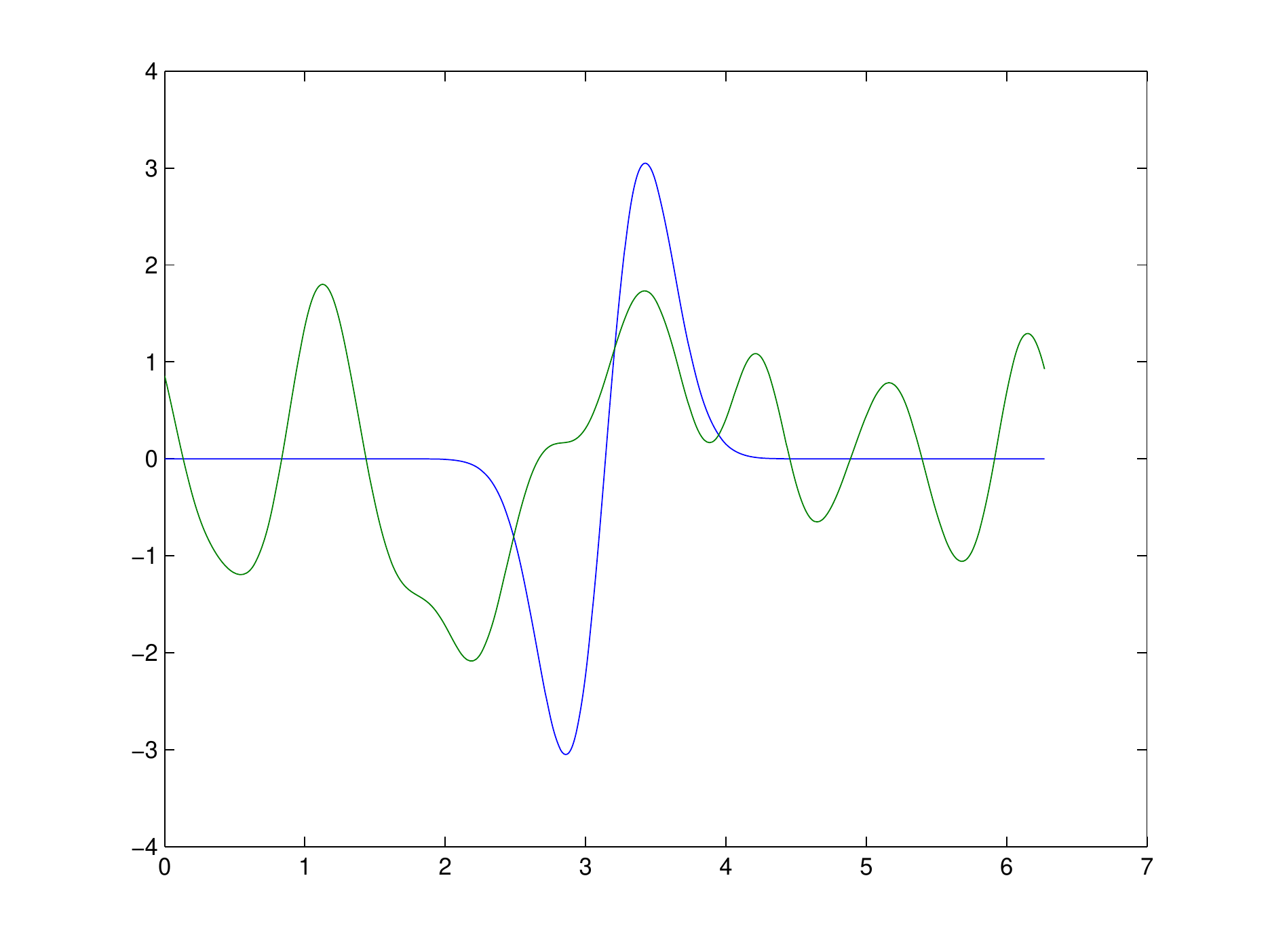}
\caption{The  initial data and the solution after $t=0.2$.
While, $\epsilon = 0.4$ is not so small, the dispersion is sufficiently strong so after a short 
time the initial wave disperses over the whole periodic domain.}
\label{fig2}
\end{figure}

As the initial data, we take the scaled 1st Hermit function
\[
u(x) = \frac{4.5}{\sqrt{\epsilon}}\left ( \frac{x}{\epsilon} \right ) e^{-\frac{x^2}{2\epsilon^2}},
\]
so that the energy $\int u^2 dx$ does not depend on $\epsilon$
and is very close to 1.
For numerical simulations, we use KdV in the form
\[
u_t = \frac{3}{2} u u_x + \frac{1}{6}u_{xxx}
\]
as it appears in the derivation of KdV from the water wave equations (see below).
Specific numerical parameters are:  the length of periodic domain $L=2\pi$. The number of modes $M=2^9$.
Time step size $\Delta t = 10^{-7}$ with the time of the evolution $T = 1$.
 The discretization in space is given by $h =L/M$.
We used the so-called Fornberg-Whitham scheme which is described in
\cite{fornberg}.

Now, using standard derivation of KdV from water waves equations, we recall the
 relation between physical parameters and rescaled dimensionless variables, see
\cite{whitham}, Chapter 13.11.

Let $h_0$ be the depth when the water is at rest and let $Y = h_0+\eta$ be
the free surface of the water. Let $a$ be a characteristic amplitude and $l$ be a
characteristic wave length. Assume that
\[
\alpha= \frac{a}{h_0} \sim \beta = \frac{h_0^2}{l^2} \ll 1.
\]
Both $\alpha$ and $\beta$ are small parameters in the problem and they must
be of the same order.

Next, use the following natural normalization
\[
x^{\prime} = l x, \,\,\, Y^{\prime} = h_0 Y, \,\,\, t^{\prime} = lt/c_0, \,\,\, \eta^{\prime} = a \eta,
\]
where primed variables are the original ones and $c_0=\sqrt{g h_0}$.

The formal asymptotic expansion leads to KdV with higher order corrections
\[
\eta_t + \eta_x +\frac{3}{2}\alpha \eta \eta_x + \frac{1}{6} \beta \eta_{xxx}+
O(\alpha^2 + \beta^2)=0.
\]
Let  $X = x-t$ and $T = \alpha t$, so the equation becomes
\begin{equation}
\eta_T +\frac{3}{2}\eta \eta_X + \frac{1}{6}  \eta_{XXX}+
O(\alpha + \beta^2/\alpha)=0.
\label{eq:lasteq}
\end{equation}
One should expect that this approximation has accuracy of the order $O(\alpha)$ for finite
time $T = O(1)$, which implies $t \sim \alpha^{-1}$ and
$t^{\prime} \sim l/(c_0\alpha)$.

Finally, since we modify our solution with another parameter $\epsilon$, we verify that KdV
approximation will still make sense for some choice of the parameters.

First, let $\alpha=\beta = \delta \ll 1$.
Let us modify $a$ and $l$ with $ a_{\epsilon} = \frac{1}{\sqrt{\epsilon}} a$ and
 $l_{\epsilon} = \epsilon l$ which is consistent with our scaling of initial data. Then, we have

\[
\alpha_{\epsilon} = \frac{ a_{\epsilon}}{h_0} = \frac{\delta}{\sqrt{\epsilon}}, \,\,\,\,\,\,\,
\beta_{\epsilon} = \frac{h_0^2}{l_{\epsilon}^2} = \frac{\delta}{\epsilon^2}.
\]
These are small with $\delta = 0.01$ and $\epsilon = 0.4$. On the other hand the "mismatch" in the
equation (\ref{eq:lasteq}) is
\[
\alpha_{\epsilon}+ \frac{\beta_{\epsilon}^2}{\alpha_{\epsilon}} = \frac{\delta}{\sqrt{\epsilon}}+
\frac{\delta}{\epsilon^{3.5}} \approx \frac{1}{4}.
\]
Therefore, our high frequency regime may approximate water waves dynamics for example
with the following parameters: $a  =  1$ m, $h_0  =  100$ m, and
$l  =  1000$ m.

\end{document}